\newcommand{\RR}{\mathbb{R}}
\newcommand*{\defeq}{\mathrel{\rlap{%
                     \raisebox{0.25ex}{$\m@th\cdot$}}%
                     \raisebox{-0.25ex}{$\m@th\cdot$}}%
                     =}
\newcommand*\owedge{\mathpalette\@owedge\relax}
\newcommand*\@owedge[1]{%
  \mathbin{%
    \ooalign{%
      $#1\m@th\bigcirc$\cr
      \hidewidth$#1\m@th\wedge$\hidewidth\cr
    }%
  }%
}
\newtheorem{thm}{Theorem}
\newtheorem{lemma}{Lemma}
\newtheorem{cor}{Corollary}
\newtheorem{defn}{Definition}
\newtheorem{prop}{Proposition}
\newtheorem*{definition-non}{Definition}
\newtheorem*{theorem-non}{Theorem}
\newtheorem*{proposition-non}{Proposition}
\newtheorem*{lemma-non}{Lemma}
\newtheorem*{corollary-non}{Corollary}
\newcommand{\beqa}{\begin{eqnarray}}
\newcommand{\beq}{\begin{equation}}
\newcommand{\eeqa}{\end{eqnarray}}
\newcommand{\eeq}{\end{equation}}
\newcommand\T[1]{\text{#1}}
\newcommand\imp{\hspace{.2in}\Rightarrow\hspace{.2in}}
\newcommand\lp[3]{\|\text{(#1,#2,#3})\|_p}
\newcommand\comma{\hspace{.2in},\hspace{.2in}}
\providecommand{\customgenericname}{}
\newcommand{\newcustomtheorem}[2]{%
  \newenvironment{#1}[1]
  {%
   \renewcommand\customgenericname{#2}%
   \renewcommand\theinnercustomgeneric{##1}%
   \innercustomgeneric
  }
  {\endinnercustomgeneric}
}
\begin{document}
\title[]{$p$-orderings: From Slater to Kemeny-Young to Ranked Pairs}
\author[]{Amir Babak Aazami and Hubert Lewis Bray}
\address{Clark University\hfill\break\indent Department of Mathematics \hfill\break\indent
Worcester, MA 01610}
\email{aaazami@clarku.edu}
\address{Duke University\hfill\break\indent Department of Mathematics \hfill\break\indent
Durham, NC 27708}
\email{bray@math.duke.edu}

\maketitle
\begin{abstract}
We introduce a family of ranking rules for preferential elections, called \emph{$p$-orderings}, obtained by minimizing the $p$-norm of the pairwise majority margins that disagree with a given ranking. This family is defined on the margin-of-victory matrix of the election and has the Slater orderings as its limit as $p \to 0^+$, includes the Kemeny-Young rule as the case $p=1$, and coincides with Ranked Pairs for all sufficiently large $p$. We show that, under natural assumptions of scale invariance, dependence only on margin magnitude, and monotonicity with respect to margin size, the score function underlying this construction is uniquely of the form $c|x|^p$. Thus Ranked Pairs arises as the eventual large-$p$ member of a canonical family of margin-based ranking rules.
\end{abstract}
\section{Introduction}
In a preferential election, the ballots induce pairwise majority margins between the candidates. Many social welfare rules can be viewed as depending only on this weighted pairwise information, rather than on the full profile itself; see, for example, \cite[Ch.~4]{brandt2016} and, for a broader distance-based perspective, \cite{elkind2015}. In this paper we study a family of such margin-based ranking rules. Given an ordering of the candidates, we consider those pairwise majority margins that disagree with that ordering and measure their size by a $p$-norm. Minimizing this quantity over all rankings defines, for each $p > 0$, a social welfare rule that we call the \emph{$p$-ordering}.
\vskip 6pt
This family connects three familiar voting rules. As $p \to 0^+$, minimizing the $p$-norm becomes equivalent to minimizing the number of pairwise disagreements with the majority relation, and hence selects the Slater orderings \cite{laslier,brandt2016}. The case $p = 1$ recovers the Kemeny-Young rule, and for sufficiently large $p$, the $p$-ordering stabilizes and coincides with Ranked Pairs \cite{tideman, zavist}. Thus the $p$-orderings place the Slater orderings, Kemeny-Young, and Ranked Pairs within a single one-parameter family of rules on pairwise majority margins.
\vskip 6pt
The use of the $p$-norm here is not ad hoc. In Theorem \ref{thm:unique} we show that if one seeks a continuous transformation of pairwise majority margins that is scale-invariant, depends only on margin magnitude, and assigns greater weight to larger margins, then one is led uniquely to functions of the form $c|x|^p$. In this sense, the family of $p$-orderings is canonical among a natural class of margin-based constructions.
\vskip 6pt
Our main theorem is proved on the class of elections for which all pairwise majority margins are nonzero and distinct, so that tie-breaking does not obscure the comparison with Ranked Pairs. As we show in Theorem \ref{thm:RP}, there exists $p_*$, depending on the election, such that every finite $p \geq p_*$ yields exactly the Ranked Pairs ordering. The proof shows that, for large $p$, larger majority margins dominate smaller ones in lexicographic fashion, and the resulting optimization procedure becomes precisely the Ranked Pairs algorithm. We emphasize that the theorem concerns stabilization for sufficiently large $p$; the corresponding sup-norm optimization problem may admit multiple minimizers.

\section{A review of the Ranked Pairs ordering}
\label{sec:RP}
Because our main result identifies Ranked Pairs as the eventual large-$p$ member of the family of $p$-orderings defined in Section \ref{sec:Lp}, in this section we briefly review the Ranked Pairs method. If we are handed three numbers, say 6, 4, and 1, and asked to order them from largest to smallest, then we would promptly write down: $6 > 4 > 1$. Lurking behind this easy task is the property of transitivity: One can never have the absurdity of $6 > 4$ and $4 > 1$ but $1 > 6$. The transitivity of numbers ensures that there is ever only one way to order them from largest to smallest. Unfortunately, we cannot, in a preferential election, order candidates this way. As first observed by M.~Condorcet in the $18^{\text{th}}$ century, when we replace numbers by candidates, and  ``$>$" by ``beats head-to-head," transitivity no longer holds.  In other words, just because candidate A beats candidate B head-to-head (``$\T{A} > \T{B}$"), and B beats C head-to-head (``$\T{B} > \T{C}$"), that does not guarantee that $\T{A} > \T{C}.$ Indeed, it could be that $\T{C} > \T{A}$. If the latter occurs, then that is called a \emph{cycle}. The question therefore arises as to what, if anything, should be done regarding cycles. Tideman's ``Ranked Pairs" voting method \cite{tideman} yields a final ranking by removing the ``weakest link" among certain cycles.  To understand how, and to facilitate our results in Section \ref{sec:limit} below, consider the following election between candidates A, B, C, and D, with all head-to-head matchups represented in a so-called \emph{margin of victory matrix} (m.o.v.~matrix):
\begin{center}
\begin{tabular}{ c|c|c|c|c|l } 
 & A & B & C & D\\
 \cline{1-5}
 A&  & $1$ & $11$ & $-7$ &\footnotesize{\text{$\leftrightsquigarrow$ A beats B by 1, beats C by 11, and loses to D by 7}}\\ 
 \cline{1-5}
 B & $-1$ &  & 5 & $3$ & \footnotesize{\text{$\leftrightsquigarrow$ B loses to A by 1, beats C by 5, and beats D by 3}}\\
 \cline{1-5} 
 C& $-11$ & $-5$ &  & 9 & \footnotesize{\text{$\leftrightsquigarrow$ C loses to A by 11, loses to B by 5, and beats D by 9}}\\
 \cline{1-5} 
 D& $7$ & $-3$ & $-9$ & &\footnotesize{\text{$\leftrightsquigarrow$ D beats A by 7, loses to B by 3, and loses to C by 9}}\\ 
\cline{1-5}
\end{tabular}
\end{center}
\vskip 6pt
Let us begin by listing these head-to-head margins of victory from largest to smallest, like so:
\begin{center}
\begin{tabular}{ c|c } 
\text{margin of victory} & \text{head-to-head outcomes}\\
\hline
11 & A $>$ C\\
\hline
9 & C $>$ D\\
\hline
7 & D $>$ A\\
\hline
5 & B $>$ C\\
\hline
3 & B $>$ D\\
\hline
1 & A $>$ B
\end{tabular}
\end{center}
The method of Ranked Pairs now uses this information to determine a final ordering, by way of the following stipulations: 
\begin{enumerate}[leftmargin=*]
\item[1.)] The margin of victory in head-to-head matchups is given priority.  In Ranked Pairs, A $>$ C will carry more weight than, say, A $>$ B, because in the former the margin of victory was greater.

\item[2.)] Transitivity of head-to-head matchups is obeyed as much as possible, starting from the top of the table. Thus, to determine the final ordering in the election above, write down the first head-to-head matchup that you see,
$$
\text{A} > \text{C},
$$
followed by the second, C $>$ D, in a way that preserves the $>$-ordering already established:
$$
\text{A} > \!\!\underbrace{\,\text{C} > \text{D}\,}_{\text{insert here}}.
$$
Now write down the third, D $>$ A, in the same way, by inserting it where D is situated:
$$
\text{A} > \text{C} > \!\!\underbrace{\,\text{D} > \text{A}\,}_{\text{insert here}}
$$
But this is a cycle, and thus transitivity is violated: We cannot have both ``A $>$ C $>$ D" (and thus ``A $>$ D") and ``D $>$ A" hold at the same time. To rectify this, Ranked Pairs stipulates that we discard D $>$ A because it has the smallest margin of victory in this cycle. We thus remove the head-to-head matchup D $>$ A from our table:

\begin{center}
\begin{tabular}{ c|c } 
\text{margin of victory} & \text{head-to-head outcomes}\\
\hline
11 & A $>$ C\\
\hline
9 & C $>$ D\\
\hline
\xcancel{7} & \xcancel{D $>$ A}~~{\scriptsize (discard)}\\
\hline
5 & B $>$ C\\
\hline
3 & B $>$ D\\
\hline
1 & A $>$ B
\end{tabular}
\end{center}

\item[3.)] Now we continue down the table: Next up is B $>$ C, which is ambiguous, because we can insert it into our current sequence
$$
\text{A} > \text{C} > \text{D}
$$
in two possible ways, each of which respects the ordering already established.  Namely, we can take B $>$ C and insert it into $\text{A} > \text{C} > \text{D}$ like this,
$$
\underbrace{\,\text{A} > \text{B} > \text{C} > \text{D}\,}_{(a)}\,,
$$
or we can  insert it like this:
$$
\underbrace{\,\text{B} > \text{A} > \text{C} > \text{D}\,}_{(b)}.
$$
In both of these cases, B comes \emph{before} C, which is what ``B $>$ C" requires.  (By contrast, we could not have placed B \emph{after} C, as in
$$
\text{A} > \text{C} > \text{B} > \text{D}\hspace{.2in}\text{or}\hspace{.2in}\text{A} > \text{C} > \text{D} > \text{B},
$$
since the relationship ``B $>$ C" is clearly not being reflected in either of these cases.) So, given the possibilities (\emph{a}) and (\emph{b}), Ranked Pairs says that we hold on to both of them for the time being and go on to the next head-to-head matchup in the table, which is B $>$ D.  Notice that this one is already compatible with both (\emph{a}) and (\emph{b})\,---\,B comes before D in both cases\,---\,so there is nothing further to do here and we can move on to the final head-to-head matchup, which is A $>$ B.   This one is compatible only with $(a)$, so we keep the sequence $(a)$ and discard $(b)$:
$$
\text{A} > \text{B} > \text{C} > \text{D}\hspace{.2in}\text{or}\underbrace{\,\xcancel{\text{B} > \text{A} > \text{C} > \text{D}}\,}_{\text{not compatible with $\text{A} > \text{B}$}},
$$
Having processed all head-to-head matchups in decreasing order, and skipping those that would create a cycle (removing the ``weakest link"), we are left with a unique and unambiguous ``winning direction" of head-to-head matchups:
$$
\text{A} > \text{B} > \text{C} > \text{D}.
$$
\item[4.)] Ranked Pairs therefore declares this to be the final ordering, with  A the winner of the election.  What is more, notice that with ``D $>$ A" discarded, the final ordering really is transitive, just like with numbers.
\end{enumerate}

Assuming no ties in margins of victory, the method of Ranked Pairs always yields a unique final ordering. In fact, in that final ordering, each candidate will have necessarily beaten head-to-head the candidate directly below them (though not necessarily all candidates below them).  In addition to this, Ranked Pairs satisfies the following properties:
\begin{enumerate}[leftmargin=*,noitemsep]
\item[1.] \emph{Majority criterion}: A candidate is the winner of the election if they receive over 50\% of first-place votes.
\item[2.] \emph{Condorcet winner criterion}: If a candidate beats all others head-to-head, then they are the winner of the election.
\item[3.] \emph{Condorcet loser criterion}: If a candidate loses to all others head-to-head, then they will not win the election. In Ranked Pairs they will in fact finish last.
\item[4.] \emph{Monotonicity}: The winning candidate cannot fall in the final ordering by becoming more popular, by which is meant one or more voters moving that candidate higher on their ballots (and not changing their relative ordering of the other candidates in the process).
\item[5.] \emph{Clone Invariance}: A group of candidates are \emph{clones}, if for every voter and every candidate not of this group, the voter has the same relative preferences between this candidate and every candidate in the group. Clone invariance says that if clone candidates are added to an election, then either the original winner, or else another member of their clone group, will still win the election.  One consequence of clone invariance is that it dampens the effect of vote splitting.
\item[6.] \emph{Last place loser independence}: Removing the last place loser will not change who wins the election. (The ``last place loser" is defined to be the winner of the election with all ballots reversed.) In fact in Ranked Pairs removing the last place loser will not alter the rest of the ordering at all.
\end{enumerate}

\section{Defining a $p$-norm on the Margin of Victory Matrix}
\label{sec:Lp}
Our goal is to assign a magnitude to pairwise majority margins that disagree with a given ordering. In this section we introduce a way of doing this. The idea is to take the entries of an m.o.v.~matrix that disagree with the given ordering and define a norm on them.  We would like this norm to be sensitive to the ordering of the candidates, because then the ordering with the smallest norm is thereby distinguished. In Section \ref{sec:prop} below we will examine the properties of this norm, known as the ``$p$-norm" in mathematics, but here we will introduce it via the following preferential election with three candidates A, B, C.  Suppose that the m.o.v.~matrix of this election is
$$
\begin{tabular}{ c|c|c|c|l } 
 & A & B & C\\
 \cline{1-4}
 A& & $-1$ & 3 &\footnotesize{\text{$\leftrightsquigarrow$ A lost to B by 1, beats C by 3}}\\ 
 \cline{1-4}
 B & $1$ &  & $-5$ &\footnotesize{\text{$\leftrightsquigarrow$ B beats A by 1, lost to C by 5}}\\
 \cline{1-4} 
 C& $-3$ & $5$ & &\footnotesize{\text{$\leftrightsquigarrow$ C lost to A by 3, beats B by 5}}\\
 \cline{1-4} 
 \end{tabular}
$$
To begin with, the information below the blank diagonal entries, in the lower-left half, is redundant (being determined by the upper right-half), so let us suppress it, as well as the accompanying text, and write simply:
\beqa
\label{eqn:ds}
\begin{tabular}{ c|c|c|c|l } 
 & A & B & C\\
 \cline{1-4}
 A& & $-1$ & $3$\\ 
 \cline{1-4}
 B &  &  & $-5$\\
 \cline{1-4} 
 C&  &  & &\\
 \cline{1-4} 
 \end{tabular}
\eeqa

Throughout the main comparison with Ranked Pairs, we will assume that all the numbers appearing in \eqref{eqn:ds} are distinct and nonzero in any preferential election, to avoid having to implement tie-breaking procedures. With that said, here now is the ``$p$-norm" that we mentioned above:
\begin{enumerate}[leftmargin=*]
\item[1.)] In \eqref{eqn:ds}, disregard any positive numbers and take the absolute value of the negative ones: $|\!-\!1| = 1$ and $|\!-\!5| = 5$. What remains is
$$
\begin{tabular}{ c|c|c|c|l } 
 & A & B & C\\
 \cline{1-4}
 A& & 1 & \\ 
 \cline{1-4}
 B &  &  &5\\
 \cline{1-4} 
 C&  &  & &\\
 \cline{1-4} 
 \end{tabular}
$$
Writing these numbers as $(1,5)$ encourages us to think of this as a point in two-dimensional space, with $x$-coordinate 1 and $y$-coordinate 5. As we'll see below, it would not matter if we had written $(5,1)$ instead.

\item[2.)] Now we take what is called the \emph{$p$-norm} of this point, denoted by ``$\|(1,5)\|_p$."  For any choice of positive real number $p$, this is defined as
$$
\|(1,5)\|_p \defeq (1^p+5^p)^{\frac{1}{p}}.
$$
For example, if we had chosen $p=2$, then the ``$2$-norm of the point $(1,5)$" is the number
$$
\|(1,5)\|_2 = (1^2+5^2)^{\frac{1}{2}} = \sqrt{26} \approx 5.099.
$$
This is, in fact, the familiar Euclidean distance of the point $(1,5)$ from the origin $(0,0)$ in the $xy$-plane (the same distance as the point $(5,1)$). Notice that the choice of $p$ matters.  Indeed, the ``$3$-norm of $(1,5)$" is
$$
\|(1,5)\|_3 = (1^3+5^3)^{\frac{1}{3}} = \sqrt[3]{126} \approx 5.013,
$$
which is smaller than the $2$-norm of $(1,5)$. In Section \ref{sec:prop} we will explore the behavior of the $p$-norm as $p$ increases, but for now, let us pause to record what we have defined: 

\begin{defn}[$p$-norm of an m.o.v.~matrix]
\label{def:Lp}
The \emph{$p$-norm} of an m.o.v.~matrix is the $(1/p)^{\text{th}}$-root of the sum of the \emph{$p^{\text{th}}$} powers of the absolute values of the negative entries in the upper triangular part of the m.o.v.~matrix.
\end{defn}
\noindent Note that, strictly speaking, $\|\cdot\|_p$ is not a norm for $0 < p < 1$, in the sense of mathematical analysis, but we will use ``$p$-norm" for all $p >0$.

\item[3.)] Now we take our final step: Let us consider \emph{all possible permutations} of the candidates A, B, C, and repeat our procedure above for each of them. For example, if instead of the ordering A, B, C in \eqref{eqn:ds} we had instead written the m.o.v.~matrix with respect to the ordering B, C, A, then we would have obtained
$$
\begin{tabular}{ c|c|c|c|l } 
 & B& C & A\\
 \cline{1-4}
 B& & $-5$ & $1$ &\footnotesize{\text{$\leftrightsquigarrow$ B lost to C by 5, beats A by 1}}\\ 
 \cline{1-4}
 C &  &  & $-3$ &\footnotesize{\text{$\leftrightsquigarrow$ C lost to A by 3}}\\
 \cline{1-4} 
 A&  &  & \\
 \cline{1-4} 
 \end{tabular}
$$
Although the ballots have not changed, observe that this is a different matrix than \eqref{eqn:ds}, and thus has a different $p$-norm; e.g., its $2$- and $3$-norms are
\beqa
\|(3,5)\|_2 \!\!&=&\!\! (3^2+5^2)^{\frac{1}{2}} \approx 5.831,\nonumber\\
\|(3,5)\|_3 \!\!&=&\!\! (3^3+5^3)^{\frac{1}{3}} \approx 5.337.\nonumber
\eeqa

\item[4.)] For three candidates, there are $1\cdot 2\cdot 3 = 6$ possible permutations. (For elections with $n$ candidates, there are
$
n! \defeq 1\cdot 2\cdot\cdots \cdot(n-1)\cdot n
$
distinct permutations of the $n$ candidates.) For each of these, we find the $p$-norm of the corresponding m.o.v.~matrix as in Definition \ref{def:Lp}. Notice that the $p$-norms involving only one entry are independent of $p$:
\beqa
\begin{tabular}{ c|c|c|c|l } 
 & A & B & C\\
 \cline{1-4}
 A& & $-1$ & $3$\\ 
 \cline{1-4}
 B &  &  & $-5$\\
 \cline{1-4} 
 C&  &  & &\\
 \cline{1-4} 
 \end{tabular} &\imp& \lp{A}{B}{C} = (1^p+5^p)^{\frac{1}{p}},\label{eqn:KY0}\\
\begin{tabular}{ c|c|c|c|l } 
 & B & C & A\\
 \cline{1-4}
 B& &$-5$& $1$\\ 
 \cline{1-4}
 C &  &  &$-3$ \\
 \cline{1-4} 
 A&  &  & &\\
 \cline{1-4} 
 \end{tabular} &\imp& \lp{B}{C}{A} = (3^p+5^p)^{\frac{1}{p}},\label{eqn:opposite}
 \eeqa
 \beqa
\begin{tabular}{ c|c|c|c|l } 
 & C & A & B\\
 \cline{1-4}
 C& & $-3$ & 5\\ 
 \cline{1-4}
 A &  &  & $-1$\\
 \cline{1-4} 
 B&  &  & &\\
 \cline{1-4} 
 \end{tabular} &\imp& \lp{C}{A}{B} = (1^p+3^p)^{\frac{1}{p}},\label{eqn:cycle*}\\
\begin{tabular}{ c|c|c|c|l } 
 & B & A & C\\
 \cline{1-4}
 B& & $1$ & $-5$\\ 
 \cline{1-4}
 A &  &  & $3$\\
 \cline{1-4} 
 C&  &  & &\\
 \cline{1-4} 
 \end{tabular} &\imp& \lp{B}{A}{C} = (5^p)^{\frac{1}{p}} = 5,\label{eqn:p=1*}\\
\begin{tabular}{ c|c|c|c|l } 
 & A & C & B\\
 \cline{1-4}
 A& & $3$& $-1$\\ 
 \cline{1-4}
 C &  &  & $5$\\
 \cline{1-4} 
 B&  &  & &\\
 \cline{1-4} 
 \end{tabular} &\imp& \lp{A}{C}{B} = (1^p)^{\frac{1}{p}} = 1,\label{eqn:winner}\\
\begin{tabular}{ c|c|c|c|l } 
 & C & B & A\\
 \cline{1-4}
 C& & 5 & $-3$\\ 
 \cline{1-4}
 B &  &  & $1$\\
 \cline{1-4} 
 A&  &  & &\\
 \cline{1-4} 
 \end{tabular} &\imp& \lp{C}{B}{A} = (3^p)^{\frac{1}{p}} = 3.\label{eqn:loser3}
\eeqa
\end{enumerate}

Observe that for every $p > 0$, the ordering with the \emph{smallest} $p$-norm is always \eqref{eqn:winner}, with $\lp{A}{C}{B} = 1$. The $p$-norm has thus distinguished the ordering A $>$ C $>$ B, and thereby defined a new voting method:
\begin{defn}[$p$-ordering]
\label{def:Lp2}
The \emph{$p$-ordering} of the candidates of a preferential election is an ordering of the candidates which minimizes the $p$-norm, as defined in Definition \ref{def:Lp}.
\end{defn}

Minimal $p$-orderings need not be unique in general. However, in Section \ref{sec:limit} we prove that when all pairwise margins of victory are nonzero and distinct, the minimal $p$-ordering is unique for all sufficiently large $p$. Let us observe that we could just as well have taken the largest $p$-norm instead of the smallest, provided we had replaced negative entries with positive ones:

\begin{cor}
\label{cor:vote_2}
If for each m.o.v.~matrix we had taken the $p$-norm of the positive entries instead of the negative ones, then the ordering with the largest $p$-norm is once again the $p$-ordering of Definition \ref{def:Lp2}.
\end{cor}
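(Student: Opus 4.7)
The plan is to exploit the skew-symmetry of the margin of victory matrix to show that the sum of $p^{\text{th}}$ powers of the absolute values of \emph{all} upper-triangular entries is an invariant, independent of the ordering of candidates. Once that invariance is in hand, minimizing the contribution from the negative entries becomes equivalent to maximizing the contribution from the positive entries, and the corollary drops out.

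First I would fix the underlying m.o.v.~matrix $M$, which satisfies $M_{ij} = -M_{ji}$. For any ordering $\pi$ of the candidates, the upper-triangular part of the reordered matrix consists of the entries $M_{\pi(i)\pi(j)}$ with $i<j$. For each unordered pair of candidates $\{a,b\}$, exactly one of $a,b$ precedes the other in $\pi$, and in either case the resulting upper-triangular entry has absolute value $|M_{ab}|$. Hence the multiset of absolute values of upper-triangular entries coincides with $\{|M_{ab}|\}$ indexed over all unordered pairs, for every choice of $\pi$. In particular,
\[
T \defeq \sum_{\{a,b\}} |M_{ab}|^p
\]
is independent of $\pi$.

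Next, for a given ordering, I would partition the upper-triangular entries into the negative ones, with absolute values $n_1,\dots,n_k$, and the positive ones $q_1,\dots,q_\ell$. The invariance above immediately gives
\[
\sum_i n_i^p \,+\, \sum_j q_j^p \;=\; T,
\]
so the ordering that minimizes $\sum_i n_i^p$ is exactly the ordering that maximizes $\sum_j q_j^p$. Since $x \mapsto x^{1/p}$ is strictly increasing on $[0,\infty)$, this is equivalent to saying that the ordering minimizing the $p$-norm of the negative entries (Definition \ref{def:Lp2}) coincides with the one maximizing the $p$-norm of the positive entries, which is the claim. The only subtle point is a clean formulation of the invariance of $T$, which is otherwise a combinatorial consequence of skew-symmetry; I do not anticipate any genuine obstacle here.
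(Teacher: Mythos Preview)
Your argument is correct and is essentially the same as the paper's: both observe that the multiset of absolute values of the upper-triangular entries is independent of the ordering (the paper phrases this as ``up to sign and permutation, the numbers \dots\ are always the same''), so that minimizing the $p$-norm of the negative entries is equivalent to maximizing that of the positive ones. Your version is slightly more explicit about the role of skew-symmetry, but the underlying idea is identical.
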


\begin{proof}
Up to sign and permutation, the numbers in the m.o.v.~matrices above are always the same: If we were to take the $p$-norm of \emph{all} entries, we would always obtain the same sum; e.g., $(1^p+3^p+5^p)^{\frac{1}{p}}$ in the previous example.  Thus, minimizing over just the negative entries is equivalent to maximizing over just the positive ones. This is true more generally for any preferential election. 
\end{proof}

The reader can verify that the largest $p$-norm of the positive entries in the six matrices above is given by \eqref{eqn:winner}, precisely the ordering that gave the $p$-ordering of Definition \ref{def:Lp2}. Therefore, we can just as well work with positive entries instead of negative ones. In any case, we now show that the $p$-norm of Definition \ref{def:Lp} is, in a precise sense, canonical, in that the choice $f(x)=|x|^p$ is not ad hoc. Under the assumptions below, it is the unique continuous score function compatible with this construction. To see how, let us once again consider the $p$-norm of all entries in the upper-right half of an m.o.v.~matrix, this time incorporating the sign of each entry as well, like so:
\beqa
\begin{tabular}{ c|c|c|c|l } 
 & A & B & C\\
 \cline{1-4}
 A& & $-1$ & $3$\\ 
 \cline{1-4}
 B &  &  & $-5$\\
 \cline{1-4} 
 C&  &  & &\\
 \cline{1-4} 
 \end{tabular} \imp Q \defeq -|1|^p+|3|^p-|5|^p.\label{eqn:Q}
\eeqa
We've dropped the exponent $\frac{1}{p}$ as we don't wish to take $p^\text{th}$ roots of negative numbers. In fact \eqref{eqn:Q} can also be used to obtain the $p$-ordering:

\begin{cor}[$Q$-sum]
\label{cor:all}
The ordering whose m.o.v.~matrix has the largest $Q$-sum as in \eqref{eqn:Q} is precisely the $p$-ordering of Definition \ref{def:Lp2}.
\end{cor}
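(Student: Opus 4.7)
The plan is to show that the $Q$-sum and the (negated) $p$-th power of the $p$-norm differ only by an additive constant that is the same for every permutation of candidates, so that maximizing one is the same as minimizing the other.

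The key observation is a permutation invariant. For a fixed election, the collection of pairwise margins of victory is indexed by the set of unordered pairs $\{i,j\}$ of candidates, and the absolute value $|m_{ij}|$ attached to each pair does not depend on how we choose to list the candidates. Consequently, the total
\[
T \;\defeq\; \sum_{\{i,j\}} |m_{ij}|^p
\]
is the same number for every ordering of the candidates, because reordering simply permutes which entries of the m.o.v.~matrix land in the upper triangle, but leaves the multiset of absolute values there unchanged (the lower triangle is just the negation of the upper triangle).

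Next I would split the upper triangular entries into their signed parts. For a fixed ordering, let
\[
P \;=\; \sum_{\substack{i<j \\ m_{ij}>0}} |m_{ij}|^p, \qquad N \;=\; \sum_{\substack{i<j \\ m_{ij}<0}} |m_{ij}|^p.
\]
By our assumption that all entries are nonzero, $P + N = T$. The quantity in \eqref{eqn:Q} is exactly $Q = P - N$, while the $p$-norm of Definition~\ref{def:Lp} is $\|\cdot\|_p = N^{1/p}$. Combining these identities,
\[
Q \;=\; P - N \;=\; T - 2N \;=\; T - 2\,\|\cdot\|_p^{\,p}.
\]
Since $T$ is a constant independent of the ordering, maximizing $Q$ over all orderings is equivalent to minimizing $N$, and since $x\mapsto x^{1/p}$ is strictly increasing on $[0,\infty)$ for $p>0$, this is in turn equivalent to minimizing $\|\cdot\|_p$. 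Therefore the orderings which maximize $Q$ are exactly the $p$-orderings of Definition~\ref{def:Lp2}, completing the proof.

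There is no real obstacle here; the only thing to be careful about is articulating the permutation invariance of $T$ clearly, namely that permuting the candidates permutes the upper triangular cells but preserves the multiset of absolute margins, because that multiset is intrinsically indexed by unordered pairs of candidates rather than by matrix positions.
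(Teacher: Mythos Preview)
Your proof is correct and follows essentially the same approach as the paper: both arguments rest on the permutation invariance of the total $T=\sum |m_{ij}|^p$, from which $Q=T-2N$ shows that maximizing $Q$ is equivalent to minimizing the $p$-norm. The paper states this more tersely (and had already invoked the invariance of $T$ in the immediately preceding Corollary~\ref{cor:vote_2}), but the content is the same.
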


\begin{proof}
The largest value for the $Q$-sum is the m.o.v.~matrix whose negative entries have the smallest $p$-norm in the sense of Definition \ref{def:Lp}.
\end{proof}

Now, if we denote the entry in the $i^{\text{th}}$ row and the $j^{\text{th}}$ column of an m.o.v.~matrix by $m_{ij}$, and let $\mu_{ij}$ denote its sign (i.e., $\mu_{ij} = +1$ for $m_{ij} > 0$ and $\mu_{ij} = -1$ for $m_{ij} < 0$), then $Q$ can be expressed more compactly as
$$
Q = \sum_{i<j} \mu_{ij}|m_{ij}|^p.
$$
Imagine now that we generalize our $p$-ordering by replacing each instance of $|\cdot|^p$ with an arbitrary function $f$, thereby defining a new sum:
$$
Q_f \defeq \sum_{i<j} \mu_{ij}f(m_{ij}).
$$
E.g., $f(m_{ij})$ can be $\text{ln}\,|m_{ij}|$, or $e^{m_{ij}}$, or $\frac{1}{1+m_{ij}^2}$, etc. For each choice of $f$, we would obtain a different $Q_f$-sum and thus possibly a different ordering. Of course, as we are interested in voting and elections, let us make the following assumptions about $f$:
\begin{enumerate}[leftmargin=*]
\item[1.] Orderings should be scale-invariant: For each $a > 0$, there is a \mbox{$b_a>0$} depending on $a$ such that $f(ax) = b_af(x)$ for all $x$. (I.e., whether everyone gets one vote or ten votes shouldn't affect the final ordering.)
\item[2.] Larger-magnitude margins of victory should carry more weight than smaller ones: $f$ is non-decreasing for positive $x$.
\item[3.] Since $\mu_{ij}$ already accounts for the sign, $f$ itself should be a nonnegative even function: $f(-x) = f(x) \geq 0$. (I.e., $f$ should respond only to the magnitude of the margin of victory.)
\end{enumerate}
With these assumptions in place, here is what makes our choice of $f$ unique:

\begin{thm}[Uniqueness of the score function]
\label{thm:unique}
The only continuous functions satisfying assumptions 1-3~above are $f(x) = c|x|^p$ for $c \geq 0,p\geq 0$.
\end{thm}

\begin{proof}
By assumption 1, if $x > 0$, then
$
f(x) = f(x\cdot 1) = b_{x}f(1)
$
for some $b_x>0$. If $x < 0$, then by assumptions 1 and 3,
$$
f(x) = f(|x|) = f(|x|\cdot 1) = b_{|x|}f(1)
$$
for some $b_{|x|}>0$. Therefore $f(x) = b_{|x|}f(1)$ for all $x \neq 0$. By similar reasoning, $f(0) = b_{|x|}f(0)$ for any $x\neq 0$. If any such $b_{|x|} \neq 1$, then we must have $f(0) = 0$; otherwise, $f(x) = f(1)$ for all $x \neq 0$, hence $f(0) =f(1)$ as well, by continuity.  As the constant function $f(x) = f(1)$ is of the form $c|x|^p$ with $c=f(1)$ and $p=0$, this proves the theorem when $f(0) \neq 0$. Let us now assume that $f(0)=0$, and focus on $f(1)$. If $f(1) = 0$, then $f(x) = 0$ for all $x$, which function is again of the form $c|x|^p$ (with $c = 0$ and any $p\geq 0$). What remains is the case when $f(0) = 0$ and $f(1) \neq 0$ (in particular, $f(1) > 0$ by assumption 2). We now finish the proof by showing that here, too, we must have $f(x) = c|x|^p$, this time with $c > 0$ and $p > 0$. To begin with, note that by the same analysis as above,
\beqa
\label{eqn:f}
f(xy) = b_{|x|}f(y) = \frac{f(x)f(y)}{f(1)}
\eeqa
for all $x \neq0$ and $y \in \RR$. Moreover, since $f(0)=0$, since $f(1) > 0$, and since the continuous function $f$ is non-decreasing by assumption 2, we can scale $f$ by a positive number if necessary so that $\int_0^1f(t)\,dt = 1$. Having done so, now define for all $x \geq0$ the new function $F(x) \defeq f(1)\int_0^xf(t)\,dt$. Observe that $F'(x) = f(1)f(x)$, and that
$$
F(x) = \!\!\!\!\!\!\!\!\underbrace{\,f(1)\,x\!\int_0^1f(sx)\,ds\,}_{\text{changing variables to $s =tx^{-1}$}}\!\!\!\!\!\!\! \overset{\eqref{eqn:f}}{=} x\!\int_0^1f(s)f(x)\,ds = xf(x) = \frac{xF'(x)}{f(1)}\cdot
$$
The general solution to this differential equation is $F(x) = cx^{f(1)}$ for $c\in \RR$, from which we extract $f(x) = cx^{f(1)-1}$ upon differentiating $F(x)$; note that $c =f(1) > 0$. In order for $cx^{f(1)-1}$  to be non-decreasing (assumption 2), we must have $f(1) \geq 1$. By assumption 3, $f$ must extend as $c|x|^{f(1)-1}$ for $x < 0$. Putting all of this together, it follows that $f$ is of the form $c|x|^p$, with $c > 0$ and $p \geq 0$.
\end{proof}

Theorem \ref{thm:unique} shows that the $p$-family is canonical within this class of continuous, scale-invariant, magnitude-based score functions.

\section{Properties of the $p$-norm ordering}
\label{sec:prop}
Note that the $p$-norm does not change ``uniformly" as $p>0$ increases, in the following sense: For $x = (10,2)$ and $y=(5,8)$, observe that
$$\|x\|_1 = 12 \comma \|y\|_1 = 13,$$
so that $\|x\|_1 < \|y\|_1$, but 
$$
\|x\|_2 = \sqrt{104} \approx 10.20 \comma \|y\|_2 = \sqrt{89} \approx 9.43,
$$
so that $\|x\|_2 > \|y\|_2$.  Thus, if some m.o.v.~matrix has the lowest $p_*$-norm for some $p_*$, that doesn't guarantee that it will remain the lowest for all $p > p_*$. 

\begin{prop}
\label{prop:Slater}
In the limit $p \to 0^+$, minimizing the $p$-norm is equivalent to finding a Slater ordering.
\end{prop}
\begin{proof}
Fix an ordering $\pi$, and let $x_1,\dots,x_k>0$ be the absolute values of the negative entries in the upper triangular part of its m.o.v.~matrix. Since the function $t \mapsto t^{1/p}$ is strictly increasing for $t>0$, minimizing
\beqa
\label{eqn:S}
\|\pi\|_p=(x_1^p+\cdots+x_k^p)^{\frac{1}{p}}
\eeqa
is equivalent to minimizing the sum $x_1^p+\cdots+x_k^p$, which, unlike \eqref{eqn:S}, has a well defined limit as $p \to 0^+$. For each $i$ we have $x_i^p \to 1$ as $p \to 0^+$, so
$$
x_1^p+\cdots+x_k^p \to k,
$$
the number of pairwise majority comparisons that disagree with $\pi$. Thus, as $p \to 0^+$, the $p$-ordering selects precisely those orderings that minimize the number of pairwise disagreements with the majority relation. These are exactly the Slater orderings.
\end{proof}

Next, the case $p=1$ is also familiar:

\begin{prop}
\label{prop:KY}
The $1$-norm ordering is equivalent to the Kemeny-Young voting method.
\end{prop}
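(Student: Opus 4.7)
The plan is to recall Kemeny-Young in its standard ``maximum pairwise agreement'' formulation and then show directly that the quantity it maximizes is, up to an additive constant independent of the ordering, equal to minus twice the $1$-norm of Definition \ref{def:Lp}. No heavy machinery is needed; the whole argument turns on the observation that the \emph{absolute values} of the upper-triangular entries of the m.o.v.~matrix, taken as a multiset, are invariant under permutations of the candidates.

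More precisely, first I would recall that the Kemeny-Young method selects the ordering that maximizes, over all orderings of the candidates, the sum of head-to-head margins of victory that agree with the ordering. In our notation, given an ordering $(c_1,\dots,c_n)$ with corresponding m.o.v.~matrix having entries $m_{ij}$, the Kemeny-Young score is
\[
K \;=\; \sum_{i<j} m_{ij},
\]
since a positive upper-triangular entry $m_{ij}>0$ records that the earlier-ranked candidate $c_i$ beat the later-ranked candidate $c_j$ head-to-head (i.e., the pairwise outcome agrees with the ordering), and a negative entry records a disagreement.

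Next, I would separate positive and negative entries: writing $P = \sum_{i<j,\, m_{ij}>0} m_{ij}$ and $N = \sum_{i<j,\, m_{ij}<0} |m_{ij}|$, we have $K = P - N$, while the $1$-norm of Definition \ref{def:Lp} is precisely $N$. The key observation is that the sum $P+N = \sum_{i<j}|m_{ij}|$ is the sum of absolute pairwise margins over all unordered pairs of candidates, and therefore \emph{does not depend on the ordering}: a permutation of the candidates simultaneously permutes rows and columns, which permutes the set of upper-triangular entries (with possible sign flips) but does not alter the multiset of their absolute values. Let $C$ denote this ordering-independent constant.

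Finally, combining these two expressions gives $K = C - 2N$, so maximizing $K$ over all orderings is equivalent to minimizing $N$, i.e., to minimizing the $1$-norm of Definition \ref{def:Lp}. Hence the $1$-ordering coincides with the Kemeny-Young ordering. The only subtle point, which I would mention briefly, is the tie-handling convention: under our standing assumption that all pairwise margins of victory are distinct and nonzero, the minimizer/maximizer is generically unique, and in the event of ties both methods produce exactly the same set of optimal orderings.
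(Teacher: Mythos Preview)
Your proof is correct and essentially the same as the paper's. The paper observes that the Kemeny-Young score of an ordering is exactly the $Q$-sum of Corollary~\ref{cor:all} at $p=1$, then invokes that corollary; your argument is simply the unfolded version, proving directly via $K=C-2N$ that maximizing $K$ is equivalent to minimizing the $1$-norm (which is precisely the content of Corollaries~\ref{cor:vote_2}--\ref{cor:all} specialized to $p=1$).
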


\begin{proof}
Let $\pi = (c_1,\dots,c_n)$ be any ordering of the candidates. The Kemeny-Young score of $\pi$ is the sum of the signed pairwise margins that agree with $\pi$, namely,
$$
\sum_{i<j} m_{c_i c_j}.
$$
When $p=1$, this is exactly the $Q$-sum of Corollary \ref{cor:all}, since $\mu_{ij}|m_{ij}| = m_{ij}$ for every upper-triangular entry of the corresponding m.o.v.~matrix. Hence maximizing the Kemeny-Young score is equivalent to maximizing the $Q$-sum, which by Corollary \ref{cor:all} is equivalent to minimizing the $1$-norm. Therefore the $1$-norm ordering is precisely the Kemeny-Young ordering.
\end{proof}

One consequence of this is that $p$-ordering is not clone invariant in general:

\begin{prop}
The family of $p$-orderings is not clone invariant in general.
\end{prop}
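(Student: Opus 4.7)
My plan is to prove the proposition by exhibiting an explicit counterexample, and the most natural one to exhibit is at $p=1$, since the previous proposition identifies the $1$-ordering with Kemeny-Young, and the failure of clone invariance for Kemeny-Young is well known. So I would start by fixing $p=1$ and reducing the task to producing a Kemeny-Young election in which the winner changes type upon introduction of a clone.

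Concretely, I would construct two elections: a base election $E$ on candidates $\{\T{A},\T{B},\T{C}\}$ chosen so that the $Q$-sum (equivalently, the Kemeny-Young score) is maximized by an ordering with $\T{A}$ first, and a cloned election $E'$ on $\{\T{A},\T{B},\T{B}',\T{C}\}$ where $\T{B}$ and $\T{B}'$ form a clone group (voters' relative rankings of $\T{B}$ versus $\T{A}$ and $\T{C}$ are unchanged, and preferences between $\T{B}$ and $\T{B}'$ are split in some prescribed way). I would then compute the m.o.v.\ matrices for both elections, enumerate the $3!=6$ orderings for $E$ and the $4!=24$ orderings for $E'$, evaluate $Q = \sum_{i<j} \mu_{ij}|m_{ij}|$ for each, and verify that in $E$ the maximum $Q$-sum is attained by an ordering with $\T{A}$ on top while in $E'$ the maximum $Q$-sum is attained by an ordering led by neither $\T{A}$ nor any member of the clone group $\{\T{B},\T{B}'\}$, or vice versa. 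Any of the standard Kemeny-Young clone-failure examples in the literature supplies such ballots; I would present the m.o.v.\ matrix directly to keep the proof self-contained, since nothing in the argument uses the ballots themselves beyond their induced margins.

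Once the counterexample is in place for $p=1$, I would close by remarking that the conclusion extends beyond $p=1$: the finite collection of $Q$-sums over all orderings of a fixed election are continuous functions of $p$, so if the maximizer strictly exceeds the second largest at $p=1$ in each of $E$ and $E'$, both maxima persist on an open interval of $p$ containing $1$, and the clone-invariance failure persists throughout. This shows that the $p$-ordering is not clone invariant for an entire range of $p$, consistent with the main theorem that clone invariance is recovered only in the limit $p\to\infty$.

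The main obstacle is purely bookkeeping: choosing margins so that (i) all entries are distinct and nonzero (as the paper standing assumes), (ii) the $Q$-sums of the winning orderings in $E$ and $E'$ are strict maxima, and (iii) the winner in $E'$ is neither the original winner nor a clone of it. I do not expect any conceptual difficulty beyond identifying a sufficiently asymmetric three-candidate Kemeny-Young example and checking that the added clone tips the balance; the computation itself is routine arithmetic on $1\times 1$, $3 \times 3$, and $6\times 6$ tables of signed margins.
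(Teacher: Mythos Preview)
Your approach is essentially the paper's: reduce to $p=1$ via the identification of the $1$-ordering with Kemeny--Young and invoke that method's known failure of clone invariance. The paper's proof is a single sentence citing that failure, whereas you propose to exhibit the counterexample explicitly and then extend it to an open interval of $p$ by continuity of the $Q$-sums---more work than the paper does, but entirely in the same spirit.
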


\begin{proof}
The Kemeny-Young method is known to fail clone invariance.
\end{proof}

(Note that when clones are present in a preferential election, the m.o.v.~matrix cannot have all distinct entries.) We will analyze the limit $p \to \infty$ in Section \ref{sec:limit}. For the remainder of this section, we discuss properties that are satisfied by the $p$-ordering. In doing so, the following will be important:

\begin{lemma}[Swapping property]
\label{lemma:swap}
If \emph{A} beats \emph{B} head-to-head, then 
$$
\|(\text{$\dots$\emph{,A,B,}$\,\dots$})\|_p < \|(\text{$\dots$\emph{,B,A,}$\,\dots$})\|_p,
$$
assuming the ordering of the other candidates is the same in each case.
\end{lemma}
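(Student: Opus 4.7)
The plan is to compare the two m.o.v.~matrices $M$ and $M'$ associated with the orderings $(\ldots,C,A,\ldots)$ and $(\ldots,A,C,\ldots)$, in which the remaining candidates occupy identical positions, and to argue that swapping $A$ and $C$ changes exactly one upper-triangular entry in a way that affects the sum of negative $p$-th powers. Concretely, I will show that $\|M'\|_p^p - \|M\|_p^p$ equals the single quantity $|\mathrm{margin}(A,C)|^p$, which is strictly positive by the nonzero assumption.

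The key observation is that the margin-of-victory value for each \emph{unordered} pair of candidates is intrinsic to the ballots and does not depend on the permutation. What the permutation does is decide which of the two members of the pair labels the row versus the column of their entry, and hence whether that entry sits above or below the diagonal. For every unordered pair $\{X,Y\}$ with $\{X,Y\}\neq\{A,C\}$, I will show that the contribution of this pair to the sum defining $\|M\|_p^p$ equals its contribution to $\|M'\|_p^p$. This is immediate when neither $X$ nor $Y$ is $A$ or $C$. For a pair of the form $\{X,A\}$ with $X$ a third candidate lying to the left of both $A$ and $C$ in the ordering, the relevant entry remains above the diagonal and equal to $\mathrm{margin}(X,A)$ in both $M$ and $M'$; only its column label changes. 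The symmetric argument handles pairs $\{X,C\}$ with the same $X$, and pairs in which $X$ lies to the right of both $A$ and $C$.

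The only pair whose contribution genuinely changes is $\{A,C\}$ itself. In the ordering $(\ldots,A,C,\ldots)$ the relevant upper-triangular entry equals $\mathrm{margin}(A,C)$, which is negative by hypothesis and therefore contributes $|\mathrm{margin}(A,C)|^p > 0$ to the sum. In the ordering $(\ldots,C,A,\ldots)$ the corresponding upper-triangular entry equals $-\mathrm{margin}(A,C) > 0$ and contributes nothing to the sum. Since every other pairwise contribution matches term by term, the $p$-th power sum for $(\ldots,A,C,\ldots)$ strictly exceeds that for $(\ldots,C,A,\ldots)$ by exactly $|\mathrm{margin}(A,C)|^p$, and taking $p$-th roots preserves the strict inequality.

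The main obstacle, such as it is, is bookkeeping rather than analysis: one must organise the sum by the unordered pair each entry encodes, so that the row/column relabeling from $M$ to $M'$ is visibly a reindexing of existing terms rather than a change in their values. Once that grouping is set up, only the $\{A,C\}$ term does anything nontrivial, and the conclusion drops out.
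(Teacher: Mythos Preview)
Your argument is correct and follows the same idea as the paper: swapping adjacent candidates $A$ and $C$ changes only the sign of the $\{A,C\}$ entry above the diagonal, while every other upper-triangular entry keeps its value (though possibly moving position), so the $p$-th-power sum changes by exactly $|\mathrm{margin}(A,C)|^p$. The paper conveys this via a worked $4\times 4$ example and then asserts the general pattern, whereas you give the explicit case analysis over unordered pairs; the content is the same.
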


\begin{proof}
Denoting by $m_{\scalebox{0.5}{AB}}$ the margin of victory of A over B, and similarly with any other pair of candidates, consider any m.o.v.~matrix in which A and B are adjacent:

{\footnotesize $$
\begin{tabular}{ c|c|c|c|c|c|c } 
 & $\cdots$& A & B & C & D& $\cdots$\\
 \cline{1-7}
 $\vdots$& & $\ddots$ & $\vdots$ &$\vdots$& $\vdots$ & $\vdots$\\ 
 \cline{1-7}
 A& & & $m_{\scalebox{0.5}{AB}}$ & $m_{\scalebox{0.5}{AC}}$ & $m_{\scalebox{0.5}{AD}}$ & $\cdots$\\ 
 \cline{1-7}
 B &  &  & & $m_{\scalebox{0.5}{BC}}$ & $m_{\scalebox{0.5}{BD}}$ & $\cdots$ \\
 \cline{1-7} 
 C&  &  & & & $m_{\scalebox{0.5}{CD}}$ & $\cdots$\\
 \cline{1-7} 
 D&  &  & & &  & $\ddots$ \\
 \cline{1-7}
 $\vdots$&  &  & & & & \\
 \end{tabular}
$$}
\vskip 6pt
If we now swap A and B, then their rows change as follows:
 
 {\footnotesize $$
\begin{tabular}{ c|c|c|c|c|c|c } 
 & $\cdots$& B & A & C & D& $\cdots$\\
 \cline{1-7}
 $\vdots$& & $\ddots$ & $\vdots$ &$\vdots$& $\vdots$ & $\vdots$\\ 
 \cline{1-7}
 B& & & $m_{\scalebox{0.5}{BA}}$ & $m_{\scalebox{0.5}{BC}}$ & $m_{\scalebox{0.5}{BD}}$ & $\cdots$\\ 
 \cline{1-7}
 A &  &  & & $m_{\scalebox{0.5}{AC}}$ & $m_{\scalebox{0.5}{AD}}$ & $\cdots$ \\
 \cline{1-7} 
 C&  &  & & & $m_{\scalebox{0.5}{CD}}$ & $\cdots$\\
 \cline{1-7} 
 D&  &  & & &  & $\ddots$ \\
 \cline{1-7}
 $\vdots$&  &  & & & & \\
 \end{tabular}
$$}

The only upper-triangular entry whose \emph{sign} changes under this adjacent swap is $m_{\scalebox{0.5}{AB}}$, which becomes $m_{\scalebox{0.5}{BA}} = -m_{\scalebox{0.5}{AB}}$; every other upper-triangular entry is merely relabeled. Therefore, if A beats B head-to-head, then $m_{\scalebox{0.5}{AB}} > 0$, so the ordering with A immediately above B does not count this comparison in its $p$-norm, whereas the ordering with B immediately above A counts the negative entry $m_{\scalebox{0.5}{BA}} < 0$. Since all other contributions to the $p$-norm are the same in the two orderings, $\|(\text{$\dots$,A,B,$\,\dots$})\|_p < \|(\text{$\dots$,B,A,$\,\dots$})\|_p.$
\end{proof}

Here are two immediate consequences of Lemma \ref{lemma:swap}.

\begin{prop}
\label{prop:+}
In a preferential election with all pairwise margins of victory distinct and nonzero, every candidate in the $p$-ordering necessarily beats head-to-head the candidate directly below them. In particular, the m.o.v.~matrix yielding the $p$-ordering always has positive entries directly above its diagonal.
\end{prop}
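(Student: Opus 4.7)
The plan is to prove this by contradiction using Lemma \ref{lemma:swap}. Suppose the $p$-ordering is $(X_1, X_2, \ldots, X_n)$, and assume for contradiction that there is some index $i$ for which $X_i$ does \emph{not} beat $X_{i+1}$ head-to-head. Since all pairwise margins of victory are nonzero, this means $X_{i+1}$ beats $X_i$ head-to-head. I would then invoke Lemma \ref{lemma:swap} with the adjacent pair $(X_i, X_{i+1})$: since $X_{i+1}$ beats $X_i$, the lemma gives
$$
\|(X_1,\dots,X_{i+1},X_i,\dots,X_n)\|_p < \|(X_1,\dots,X_i,X_{i+1},\dots,X_n)\|_p.
$$
This contradicts the assumption that $(X_1,\dots,X_n)$ achieves the minimum $p$-norm over all permutations. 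Hence no such $i$ exists, and every candidate in the $p$-ordering beats the candidate directly below them.

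For the second assertion, I would simply observe that if $X_i$ beats $X_{i+1}$, then by the sign convention for the m.o.v.~matrix (row candidate vs.\ column candidate), the entry $m_{i,i+1}$ equals the margin by which $X_i$ beats $X_{i+1}$, which is positive. Thus all entries immediately above the diagonal of the m.o.v.~matrix associated to the $p$-ordering are positive.

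There is essentially no obstacle here: the proposition is an immediate corollary of Lemma \ref{lemma:swap}, and the only thing to be careful about is that the lemma is stated precisely for \emph{adjacent} swaps (which is exactly what we need), and that the hypothesis of distinct, nonzero margins is used to rule out ties in the strict inequality.
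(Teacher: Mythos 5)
Your proof is correct and is exactly the paper's argument: the paper also derives the proposition directly from Lemma \ref{lemma:swap} by noting that otherwise an adjacent swap would strictly lower the $p$-norm, contradicting minimality. Your version just spells out the contradiction and the sign convention more explicitly.
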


\begin{proof}
This is immediate from Lemma \ref{lemma:swap}, for otherwise we can lower the $p$-norm by swapping some pair of adjacent candidates.
\end{proof}

Proposition \ref{prop:+} greatly narrows the possible candidates for the $p$-ordering. Consider, e.g., the following four-candidate election:

{\footnotesize $$
\underset{\text{$\|(\text{C,A,B,D})\|_p = (5^p+3^p)^{\frac{1}{p}}$}}{\begin{tabular}{ c|c|c|c|c|l } 
 & C & A & B & D\\
 \cline{1-5}
 C& & 1 & $-5$ & 9\\ 
 \cline{1-5}
 A &  &  & 11 & $-3$\\
 \cline{1-5} 
 B&  &  & & 7\\
 \cline{1-5} 
 D&  &  & &\\
 \cline{1-5}
 \end{tabular}} \comma
\underset{\text{$\|(\text{C,D,A,B})\|_p = (5^p+7^p)^{\frac{1}{p}}$}}{\begin{tabular}{ c|c|c|c|c|l } 
 & C & D & A & B\\
 \cline{1-5}
 C& & 9 & 1 & $-5$\\ 
 \cline{1-5}
 D &  &  & 3 & $-7$\\
 \cline{1-5} 
 A&  &  & & 11\\
 \cline{1-5} 
 B&  &  & &\\
 \cline{1-5}
 \end{tabular}} \comma
 \underset{\text{$\|(\text{A,B,C,D})\|_p = (1^p+3^p)^{\frac{1}{p}}$}}{\begin{tabular}{ c|c|c|c|c|l } 
 & A & B & C & D\\
 \cline{1-5}
 A& & 11 & $-1$ & $-3$\\ 
 \cline{1-5}
 B &  &  & 5 & 7\\
 \cline{1-5} 
 C&  &  & & 9\\
 \cline{1-5} 
 D&  &  & &\\
 \cline{1-5}
 \end{tabular}}
$$}

{\footnotesize \beqa
\underset{\text{$\|(\text{B,C,D,A})\|_p = (11^p)^{\frac{1}{p}} = 11$}}{\begin{tabular}{ c|c|c|c|c|l } 
 & B & C  &D  &A \\
 \cline{1-5}
 B& & 5 & 7 & $-11$\\ 
 \cline{1-5}
 C &  &  & 9 & 1\\
 \cline{1-5} 
 D&  &  & & 3\\
 \cline{1-5} 
 A&  &  & &\\
 \cline{1-5}
 \end{tabular}}\ \comma\!
 \underset{\text{$\|(\text{D,A,B,C})\|_p = (1^p+7^p+9^p)^{\frac{1}{p}}$}}{\begin{tabular}{ c|c|c|c|c|l } 
 & D & A & B  &C \\
 \cline{1-5}
 D& & 3 & $-7$ & $-9$\\ 
 \cline{1-5}
  A&  &  & 11 & $-1$\\
 \cline{1-5} 
  B&  &  & & 5\\
 \cline{1-5} 
  C&  &  & &\\
 \cline{1-5}
 \end{tabular}\label{eqn:four}}
\eeqa}

Among the $4!=24$ possible orderings, these five are the only ones whose entries immediately above the diagonal are positive; among them, the smallest $p$-norm is achieved by A $>$ B $>$ C $>$ D. The second consequence of Lemma \ref{lemma:swap} is that it always places Condorcet winners at the top and Condorcet losers at the bottom; more generally, $p$-orderings are Smith-consistent:

\begin{prop}[Smith-consistency]
\label{prop:Cond}
For all $p > 0$, $p$-ordering is Smith-consistent. That is, if the candidate set partitions as $\mathcal{A} \sqcup \mathcal{B}$, with each candidate in $\mathcal{A}$ beating head-to-head every candidate in $\mathcal{B}$, then every $p$-ordering ranks every candidate in $\mathcal{A}$ above every candidate in $\mathcal{B}$. 
\end{prop}
\begin{proof}
Suppose, toward a contradiction, that some $p$-ordering contains some candidate in $\mathcal{B}$ above some candidate in $\mathcal{A}$. Then there must be an adjacent occurrence of the form B $>$ A, for some B $\in \mathcal{B}$ and A $\in \mathcal{A}$.  But as A beat B head-to-head, by Lemma \ref{lemma:swap} the $p$-norm will decrease upon swapping A and B, a contradiction.
\end{proof}

Note that all Condorcet-compatible methods fail the \emph{participation criterion}, as shown in \cite{moulin} (see also \cite{fishburn,perez,saari}). For weaker variants of this property (e.g., \emph{positive} or \emph{negative involvement}), which can be categorized as ``no-show paradoxes," as well as for the compatibility of Condorcet-consistent voting methods with ``spoiler effects," see, e.g., \cite{wes}. The next two properties we prove, last place loser independence and monotonicity, both rest on the following ``removal" property of the $p$-ordering:

\begin{lemma}[Removing the first or last candidate]
\label{lemma:lpl}
If the first- or last-placed candidate of a $p$-ordering is removed, then the ordering left over is a $p$-ordering of the election without that candidate.
\end{lemma}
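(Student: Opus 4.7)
The plan is to exploit an additive decomposition of the $p$-th power of the $p$-norm which isolates the contribution of any candidate sitting at an extreme position (top or bottom) of the ordering. The key observation is that such a candidate's contribution depends only on its head-to-head records and not on how the remaining candidates are arranged.

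First I set up notation. Let $\pi=(c_1,c_2,\ldots,c_n)$ denote the $p$-ordering of the full election, and suppose we remove the first-placed candidate $c_1$. For any ordering $\tau=(d_2,\ldots,d_n)$ of the remaining candidates $\{c_2,\ldots,c_n\}$, form the extended ordering $(c_1,d_2,\ldots,d_n)$ of the full candidate set. In the upper-triangular part of its m.o.v.~matrix, the contributing entries split into two disjoint groups: the pairs $(c_1,d_j)$ occupying the top row, and the pairs $(d_i,d_j)$ with $i<j$ occupying the strictly interior block. The top-row contribution to the $p$-th power of the $p$-norm is
$$
S_1 \defeq \sum_{k\geq 2,\; c_k\text{ beats }c_1} |m_{c_1,c_k}|^p,
$$
which depends only on $c_1$'s head-to-head records and, crucially, is independent of $\tau$. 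The interior block is precisely the m.o.v.~matrix of the sub-election on $\{c_2,\ldots,c_n\}$ under the ordering $\tau$, so by Definition~\ref{def:Lp} its contribution is exactly $\|\tau\|_p^p$. Thus
$$
\|(c_1,d_2,\ldots,d_n)\|_p^p \;=\; S_1 \,+\, \|\tau\|_p^p.
$$

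Since $\pi$ is the $p$-ordering of the full election, it minimizes $\|\cdot\|_p$ over all orderings and in particular over orderings that begin with $c_1$. Because $S_1$ is a $\tau$-independent constant, the displayed identity forces the restriction $(c_2,\ldots,c_n)$ to minimize $\|\tau\|_p$ over all orderings of $\{c_2,\ldots,c_n\}$. By Definition~\ref{def:Lp2}, this is precisely the statement that $(c_2,\ldots,c_n)$ is the $p$-ordering of the election with $c_1$ removed.

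For the last-placed candidate $c_n$, the argument is symmetric. Fixing $c_n$ at the bottom, pairs of the form $(d_i,c_n)$ in the rightmost column of the m.o.v.~matrix contribute $|m_{c_n,d_i}|^p$ whenever $c_n$ beats $d_i$, and this sum depends only on $c_n$'s head-to-head records; the remaining pairs form the sub-matrix on $\{c_1,\ldots,c_{n-1}\}$. The identical decomposition and minimization argument then yields the conclusion. The only real content is getting the additive split right; once this is in place, no estimate is required, as the constant term drops out of the minimization entirely.
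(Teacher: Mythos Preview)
Your proof is correct and follows essentially the same approach as the paper's: both arguments observe that when a candidate sits in first (resp.\ last) place, the top row (resp.\ rightmost column) of the m.o.v.~matrix contributes a fixed amount to $\|\cdot\|_p^p$ regardless of how the remaining candidates are arranged, so the residual ordering must minimize the $p$-norm of the sub-election. Your version makes the additive decomposition $\|(c_1,\tau)\|_p^p = S_1 + \|\tau\|_p^p$ explicit, whereas the paper illustrates the same point through an example, but the underlying idea is identical.
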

\begin{proof}
Let $E$ be the original election, and let $E \setminus \{A\}$ denote the election obtained by removing a candidate $A$. Suppose that $A$ is first in a $p$-ordering of $E$. For any ordering $\pi$ of the remaining candidates, the comparisons involving $A$ contribute the same quantity to the $p$-norm of $(A,\pi)$, namely,
$$
\sum_{\substack{B \neq A \\ \text{$m_{\scalebox{0.5}{AB}}<0$}}} |m_{\scalebox{0.5}{AB}}|^p.
$$
Therefore
$$
\big(\|(A,\pi)\|_p\big)^p = \!\!\!\sum_{\substack{B \neq A \\ \text{$m_{\scalebox{0.5}{AB}}<0$}}} |m_{\scalebox{0.5}{AB}}|^p + (\|\pi\|_{p,\scalebox{0.5}{$E\setminus\{A\}$}})^p,
$$
where $\|\pi\|_{p,\scalebox{0.5}{$E\setminus\{A\}$}}$ denotes the $p$-norm computed in the reduced election \mbox{$E \setminus \{A\}$.} Since the first term is independent of $\pi$, minimizing $\|(A,\pi)\|_p$ over all orderings with $A$ first is equivalent to minimizing $\|\pi\|_{p,\scalebox{0.5}{$E\setminus\{A\}$}}$. Hence the ordering left after removing $A$ is precisely the $p$-ordering of $E \setminus \{A\}$. The argument when $A$ is last is identical. For any ordering $\pi$ of the remaining candidates,
$$
\big(\|(\pi,A)\|_p\big)^p = (\|\pi\|_{p,\scalebox{0.5}{$E\setminus\{A\}$}})^p + \!\!\!\sum_{\substack{B \neq A \\ \text{$m_{\scalebox{0.5}{BA}}<0$}}} |m_{\scalebox{0.5}{BA}}|^p,
$$
so again the contribution involving $A$ is constant, and minimizing over orderings with $A$ last is equivalent to minimizing the $p$-norm in the reduced election. This proves the lemma.
\end{proof}
\begin{prop}[Last Place Loser Independence]
For all $p > 0$, the $p$-ordering of the reversed-ballot election is precisely the reverse of the original one. As a consequence, $p$-ordering is last place loser independent.
\end{prop}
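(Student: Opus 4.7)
The plan is to exploit a clean symmetry: reversing all ballots flips the sign of every m.o.v.~matrix entry, while reversing the order of the candidates reflects the matrix across its anti-diagonal. For indices $i<j$ the reflected position $(n-j+1,n-i+1)$ is still upper-triangular, so reflection alone turns the upper-triangular entry $m_{kl}$ (with $k<l$) into $m_{lk}=-m_{kl}$ by antisymmetry of the m.o.v.~matrix. Flipping all signs because of the ballot reversal then cancels this minus sign, so the two operations \emph{together} leave the multiset of upper-triangular entries unchanged, signs and all.

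Concretely, for any permutation $P$ of the candidates let $P^R$ denote its reverse. The argument above shows that the negative upper-triangular entries of the m.o.v.~matrix of $P$ in the original election coincide, as a multiset, with those of the m.o.v.~matrix of $P^R$ in the reversed-ballot election. By Definition \ref{def:Lp} the two matrices therefore have exactly the same $p$-norm. Minimizing over all permutations on both sides then shows that the $p$-ordering of the reversed-ballot election is precisely the reverse of the original $p$-ordering.

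For last place loser independence, recall that by definition the last place loser is the winner of the reversed-ballot election. By the first part of the proposition, this is exactly the last-placed candidate of the original $p$-ordering. Now invoke Lemma \ref{lemma:lpl} applied to that last-placed candidate: removing them leaves an ordering that agrees with the $p$-ordering of the election with that candidate deleted. In particular the winner (and indeed the entire remaining order) is unchanged, so $p$-ordering is last place loser independent.

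The main obstacle is essentially just careful index bookkeeping---checking that anti-diagonal reflection of an $n\times n$ matrix preserves the upper-triangular region, and that the sign flip from antisymmetry cancels exactly against the sign flip from ballot reversal. Once the two multisets of upper-triangular entries have been identified, equality of $p$-norms is automatic from the definition and no analytic estimates are needed; the second conclusion is then a one-line application of the previously established removal lemma.
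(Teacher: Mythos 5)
Your proof is correct, and it rests on the same underlying symmetry as the paper's---ballot reversal negates every margin, order reversal negates every upper-triangular entry, and the two negations cancel---but you execute it by a genuinely different route. The paper argues through the $Q$-sum of Corollary \ref{cor:all}: reversing the ballots negates every $Q$-sum, so the largest $Q$-sum of the reversed election comes from the ordering with the smallest $Q$-sum of the original election, and that minimizer is then identified as the reverse of the original $p$-ordering by repeatedly swapping adjacent candidates. You instead work directly with Definition \ref{def:Lp}: the anti-diagonal reflection sends the upper-triangular region to itself, the antisymmetry sign flip cancels against the ballot-reversal sign flip, and so the multiset of upper-triangular entries of $P^R$ in the reversed election equals that of $P$ in the original election, giving equality of $p$-norms outright; minimizing over the bijection $P \mapsto P^R$ then yields the first claim. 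Your version is arguably tidier, since it avoids both the $Q$-sum detour and the swap-by-swap identification of the $Q$-minimizer with the reversed ordering, and it makes the (harmless) non-uniqueness caveat transparent: the minimizers of the reversed election are exactly the reverses of the original minimizers. The deduction of last place loser independence from Lemma \ref{lemma:lpl} is identical to the paper's.
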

\begin{proof}
Let $E^{\mathrm{rev}}$ denote the election obtained by reversing every ballot of the original election $E$, and let $\pi = (c_1,\dots,c_n)$ be any ordering of the candidates. Write $\pi^{\mathrm{rev}} = (c_n,\dots,c_1)$ for the reverse ordering. If $\overline{m}_{ij}$ denotes the pairwise margins in $E^{\mathrm{rev}}$, then $\overline{m}_{ij} = -m_{ij}$ for all $i,j$. Hence, for every $i<j$,
$$
\overline{m}_{c_jc_i} = -m_{c_jc_i} = m_{c_ic_j}.
$$
Therefore the upper-triangular entries of the m.o.v.~matrix of $E^{\mathrm{rev}}$ with respect to $\pi^{\mathrm{rev}}$ are exactly the upper-triangular entries of the m.o.v.~matrix of $E$ with respect to $\pi$. In particular,
$$
Q_{\scalebox{0.6}{$E^{\mathrm{rev}}$}}(\pi^{\mathrm{rev}}) = Q_{\scalebox{0.6}{$E$}}(\pi).
$$
It follows that $\pi$ maximizes $Q_{\scalebox{0.6}{$E$}}$ if and only if $\pi^{\mathrm{rev}}$ maximizes $Q_{\scalebox{0.6}{$E^{\mathrm{rev}}$}}$. By Corollary \ref{cor:all}, the $p$-ordering of the reversed-ballot election is therefore the reverse of the original $p$-ordering. Now let $L$ be the last place loser of the original election, i.e., the winner of the reversed-ballot election. By the first part of the proposition, $L$ is exactly the candidate appearing last in the original $p$-ordering. Lemma \ref{lemma:lpl} therefore implies that removing $L$ leaves the $p$-ordering of the remaining candidates unchanged. In particular, the winner is unchanged, so $p$-ordering is last place loser independent.
\end{proof}
\begin{prop}[Monotonicity]
\label{prop:mon}
For any $p > 0$, if the $p$-ordering is unique, then it satisfies the monotonicity property.
\end{prop}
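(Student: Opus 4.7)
The plan is to exploit the $Q$-sum reformulation in Corollary \ref{cor:all}: setting $\phi(t) := \operatorname{sign}(t)\,|t|^p$, the $p$-ordering is the permutation $\sigma$ maximizing
\[
Q_\sigma \;=\; \sum_{a \succ_\sigma b}\phi\bigl(M(a,b)\bigr),
\]
where the sum is over pairs with $a$ ranked above $b$ in $\sigma$. Since any ``more popular'' ballot change decomposes into a sequence of \emph{atomic} swaps---one voter exchanging $X$ with the candidate $Y$ immediately above $X$ on that ballot---it suffices to prove monotonicity under such an atomic change, which alters only $M(X,Y)$, increasing it by $2$.

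Setting $\alpha := \phi(M(X,Y)+2)-\phi(M(X,Y)) > 0$, a direct computation shows that for every permutation $\sigma$,
\[
Q'_\sigma - Q_\sigma \;=\; \begin{cases} +\alpha & \text{if } X \succ_\sigma Y,\\ -\alpha & \text{if } Y \succ_\sigma X, \end{cases}
\]
so $Q'$ shifts uniformly on each of the two classes $\mathcal{A} = \{X \succ Y\}$ and $\mathcal{B} = \{Y \succ X\}$. Consequently the maximizer within each class is unaffected, and only the relative standing of the two class-maxima can flip. If $\pi \in \mathcal{A}$, then $\pi$ gains $+\alpha$ while every $\mathcal{B}$-ordering loses $\alpha$, so $\pi' = \pi$ and $X$'s position is unchanged. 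If $\pi \in \mathcal{B}$, then $\pi'$ is either $\pi$ itself or the old $\mathcal{A}$-maximizer $\tau$; in the latter case $X$ is above $Y$ in $\pi'$, and what remains is to show that $X$'s position $k^\star$ in $\tau$ is at most its position $k$ in $\pi$.

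I expect this last sub-case to be the main obstacle. My plan is a double exchange argument. Let $S$ (respectively $T$) denote the candidates that are above $X$ in $\tau$ but below $X$ in $\pi$ (respectively above $X$ in $\pi$ but below $X$ in $\tau$), so that $k^\star - k = |S|-|T|$ and $Y \in T$. Optimality of $\pi$ against the move ``slide $X$ from position $k$ down to position $k^\star$'' yields $\sum_U \phi(M(X,U)) \geq 0$, where $U$ ranges over the $k^\star - k$ candidates directly below $X$ in $\pi$; optimality of $\tau$ within $\mathcal{A}$ against the reverse move ``slide $X$ from $k^\star$ up to $k$'' (which stays in $\mathcal{A}$ because $Y$ lies below position $k^\star$ in $\tau$) yields $\sum_V \phi(M(X,V)) \leq 0$ for $V$ ranging over the $k^\star - k$ candidates directly above $X$ in $\tau$. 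I plan to combine these two sums with the negative contribution $\phi(M(X,Y)) < 0$ from $Y \in T$ and with the hypothesis that all margins are distinct and nonzero, to force $|S| \leq |T|$ and hence $k^\star \leq k$.

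As a potentially cleaner fallback I would attempt an induction on the number $n$ of candidates using Lemma \ref{lemma:lpl}: identify a candidate $Z \neq X$ occupying the same first or last slot in both $\pi$ and $\pi'$ (guaranteed whenever a Condorcet winner or loser exists distinct from $X$ and $Y$, by Proposition \ref{prop:Cond}), remove $Z$, and apply the inductive hypothesis to the $(n-1)$-candidate restriction, whose $p$-ordering matches the restriction by Lemma \ref{lemma:lpl}.
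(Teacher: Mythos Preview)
Your setup through Case 2a is sound and already more careful than the paper's argument, which is essentially a single sentence: when $A$ is raised on ballots, the $Q$-contributions from $A$'s row increase while the remaining terms weakly decrease, ``hence $A$'s ranking can only become better.'' The paper neither performs an atomic-swap decomposition nor isolates anything like your Case 2b.

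Case 2b, however, is where your proposal has a genuine gap. In the double-exchange sketch the two inequalities you write down are correct, but they concern the sets $U=\{\text{positions }k{+}1,\dots,k^\star\text{ in }\pi\}$ and $V=\{\text{positions }k,\dots,k^\star{-}1\text{ in }\tau\}$, which have no controlled relationship to your sets $S$ and $T$: a candidate in $U$ is below $X$ in $\pi$ but need not be above $X$ in $\tau$, and symmetrically for $V$. Worse, $Y$ sits \emph{above} position $k$ in $\pi$ and \emph{below} position $k^\star$ in $\tau$, so $Y\notin U$ and $Y\notin V$; the ``negative contribution $\phi(M(X,Y))<0$ from $Y\in T$'' therefore never appears in either sum, and there is nothing to combine. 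The two inequalities $\sum_{U}\phi(M(X,U))\ge 0$ and $\sum_{V}\phi(M(X,V))\le 0$ carry no information about $|S|$ versus $|T|$ without control on how $U,V,S,T$ overlap, and nothing in your hypotheses provides that control.

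The induction fallback does not close the gap either: it requires some $Z\ne X$ occupying the \emph{same} first or last slot in both $\pi$ and $\pi'=\tau$, and absent a Condorcet winner or loser distinct from $X$ and $Y$ there is no reason such a $Z$ exists---$\pi$ and $\tau$ can differ at both endpoints simultaneously. So as written the proposal does not establish that $X$ cannot fall when the optimizer jumps from $\mathcal{B}$ to $\mathcal{A}$.
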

\begin{proof}
Let $E$ be the original election, and let $E'$ be the election obtained by moving the winning candidate $A$ higher on one or more ballots, without changing the relative ordering of any other pair of candidates. Let $\pi$ be the $p$-ordering of $E$, with $A$ in first place. For each candidate $B \neq A$, the pairwise margin of $A$ over $B$ is nondecreasing when passing from $E$ to $E'$. Define
$$
\phi(x) \defeq \operatorname{sgn}(x)|x|^p \qquad \text{and} \qquad \delta_{\scalebox{0.5}{$B$}} \defeq \phi(m'_{\scalebox{0.5}{AB}}) - \phi(m_{\scalebox{0.5}{AB}}),
$$
where $m_{\scalebox{0.5}{AB}}$ and $m'_{\scalebox{0.5}{AB}}$ are the pairwise margins of $A$ over $B$ in $E$ and $E'$, respectively. Since $\phi$ is increasing and $m'_{\scalebox{0.5}{AB}} \geq m_{\scalebox{0.5}{AB}}$, we have $\delta_{\scalebox{0.5}{$B$}} \geq 0$ for every $B \neq A$. Now fix any ordering $\rho$ of the candidates. The change in the $Q$-sum from $E$ to $E'$ comes only from comparisons involving $A$. If $B$ lies below $A$ in $\rho$, then the comparison between $A$ and $B$ contributes $+\delta_{\scalebox{0.5}{$B$}}$ to $Q_{\scalebox{0.6}{$E'$}}(\rho)-Q_E(\rho)$; if $B$ lies above $A$ in $\rho$, then it contributes $-\delta_{\scalebox{0.5}{$B$}}$. Therefore
$$
Q_{\scalebox{0.6}{$E'$}}(\rho)-Q_{\scalebox{0.6}{$E$}}(\rho)
\ = \hspace{-.25in}\sum_{\substack{B \neq A \\ B\text{ below }A\text{ in }\rho}} \hspace{-.25in}\delta_{\scalebox{0.5}{$B$}}
\ \ \ -\!\!\! \sum_{\substack{B \neq A \\ B\text{ above }A\text{ in }\rho}} \hspace{-.25in}\delta_{\scalebox{0.5}{$B$}}
\leq \sum_{B \neq A}\delta_{\scalebox{0.5}{$B$}},
$$
with equality precisely when $A$ is first in $\rho$. Since $A$ is first in $\pi$, we obtain
$$
Q_{\scalebox{0.6}{$E'$}}(\pi) = Q_{\scalebox{0.6}{$E$}}(\pi) + \sum_{B \neq A} \delta_{\scalebox{0.5}{$B$}},
$$
from which it follows that
$$
Q_{\scalebox{0.6}{$E'$}}(\rho) \leq Q_{\scalebox{0.6}{$E$}}(\rho)\ + \sum_{B \neq A} \delta_{\scalebox{0.5}{$B$}} \leq Q_{\scalebox{0.6}{$E$}}(\pi)\ + \sum_{B \neq A} \delta_{\scalebox{0.5}{$B$}} = Q_{\scalebox{0.6}{$E'$}}(\pi),
$$
because $\pi$ maximizes $Q_{\scalebox{0.6}{$E$}}$ by Corollary \ref{cor:all}. Thus $\pi$ also maximizes $Q_{\scalebox{0.6}{$E'$}}$, so $A$ remains first in a $p$-ordering after being moved higher on the ballots. This is exactly the monotonicity property.
\end{proof}

A final property of the $p$-ordering has to do with the absence of cycles:

\begin{prop}
For a given preferential election with all pairwise margins of victory distinct and nonzero, if there are no cycles, then there is a unique ordering of the candidates so that the corresponding m.o.v.~matrix has zero $p$-norm for all $p > 0$. Consequently, in such a case all $p$-orderings will yield the same final ordering.
\end{prop}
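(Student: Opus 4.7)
The plan is to show that ``no cycles'' together with the distinctness and nonzeroness hypothesis forces the head-to-head relation $>$ to be a transitive tournament, and then to exhibit the unique linearization of this tournament as the (unique) minimizer of the $p$-norm for every $p>0$.

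First I would observe that because all pairwise margins are nonzero, for every pair of candidates exactly one beats the other head-to-head, so $>$ is a complete and antisymmetric relation. The hypothesis that no cycles occur is precisely the statement that $>$ is transitive: any three candidates $A$, $B$, $C$ with $A>B$ and $B>C$ must also satisfy $A>C$, for otherwise $C>A$ would produce the cycle $A>B>C>A$. A transitive tournament on finitely many vertices is a total order, so there exists a unique permutation $\sigma^*$ of the candidates such that each candidate in $\sigma^*$ beats head-to-head every candidate appearing below them.

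Next I would write the m.o.v. matrix with rows and columns indexed in the order $\sigma^*$. By construction every entry strictly above the diagonal records the margin of an earlier-listed candidate over a later-listed one and is therefore positive. By Definition \ref{def:Lp}, the $p$-norm of this matrix is a sum over the empty set of negative upper-triangular entries and hence equals zero for every $p>0$.

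Finally, uniqueness: for any ordering $\sigma\ne\sigma^*$ there is at least one pair of candidates whose relative order in $\sigma$ is opposite to their relative order in $\sigma^*$, i.e.\ a pair $(i,j)$ with $i$ appearing before $j$ in $\sigma$ but $j>i$ head-to-head. The $(i,j)$ entry of the resulting m.o.v. matrix is then a negative number, so its $p$-norm is strictly positive for every $p>0$. Hence $\sigma^*$ uniquely minimizes the $p$-norm for each $p$, and all $p$-orderings coincide with $\sigma^*$. The only real hurdle I expect is making the identification ``no cycles'' $\Longleftrightarrow$ transitivity of $>$ rigorous given how informally cycles have been discussed earlier; once that equivalence is noted, everything reduces to the definition of the $p$-norm.
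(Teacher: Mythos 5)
Your proof is correct and follows essentially the same route as the paper: exhibit the ordering compatible with the (transitive) head-to-head relation, note its upper triangle has no negative entries so its $p$-norm is zero, and observe that any other permutation inverts some pair and hence has a negative entry and strictly positive norm. You spell out the transitive-tournament linearization and the pair-inversion argument more explicitly than the paper (which invokes Lemma \ref{lemma:swap} for the uniqueness step), but the substance is the same.
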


\begin{proof}
If there are no cycles, then the head-to-head relation is transitive. Therefore there is a unique ordering $(c_1,\dots,c_n)$ such that $c_i$ beats $c_j$ whenever $i<j$. With respect to this ordering, every upper-triangular entry of the m.o.v.~matrix is positive, so the corresponding $p$-norm is zero for every $p>0$. Conversely, if an ordering has zero $p$-norm, then every upper-triangular entry of its m.o.v.~matrix must be positive. Thus each earlier candidate beats each later candidate head-to-head, so this ordering realizes the transitive majority relation. By uniqueness of the transitive ordering, it must be $(c_1,\dots,c_n)$. Hence there is a unique ordering with zero $p$-norm, and consequently all $p$-orderings coincide with it.
\end{proof}

\section{Ranked Pairs and the convergence of the $p$-norm}
\label{sec:limit}
We now present our main result. First, given an m.o.v.~matrix

{\footnotesize $$
\begin{tabular}{ c|c|c|c|c|c|c } 
 & A& B & C & D & E& $\cdots$\\
 \cline{1-7}
 A& & $m_{12}$ & $m_{13}$ &$m_{14}$& $m_{15}$ & $\cdots$\\ 
 \cline{1-7}
 B& & & $m_{23}$ &$m_{24}$& $m_{25}$ & $\cdots$\\ 
 \cline{1-7}
 C &  &  & & $m_{34}$ & $m_{35}$ & $\cdots$ \\
 \cline{1-7} 
 D&  &  & & & $m_{45}$ & $\cdots$\\
 \cline{1-7} 
 E&  &  & & &  & $\cdots$ \\
 \cline{1-7}
 $\vdots$&  &  & & & & \\
 \end{tabular}
$$}

recall from Corollary \ref{cor:all} that the $p$-ordering is the one with the largest $Q$-sum, where the latter is defined by
$$
Q = \pm |m_{12}|^p \pm |m_{13}|^p \pm \cdots \pm |m_{n-1,n}|^p.
$$
Crucial to our result is the following property of the $Q$-sum:

\begin{lemma}[Cumulative Dominance Property]
\label{lemma:CDP}
For any preferential election with all pairwise margins of victory distinct, there is a value $p_*$, depending on the election, such that for all $p \geq p_*$ each term in the $Q$-sum exceeds the sum of all smaller terms:
\beqa
\label{eqn:CDP}
|m_{ij}|^p\ > \hspace{-.3in}\underbrace{\,\sum |m_{ab}|^p\,}_{\text{sum over all $|m_{ab}| < |m_{ij}|$}}\hspace{.2in}\text{for each $|m_{ij}|$ and for all $p \geq p_*$}.
\eeqa
\end{lemma}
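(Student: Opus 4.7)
The plan is to reduce the claimed inequality to the elementary fact that $r^p \to 0$ as $p \to \infty$ whenever $0 < r < 1$, and then take a maximum over the finitely many margin sizes.

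First I would list the distinct positive values $\{|m_{ij}|\}$ in increasing order as
\[
0 < \mu_1 < \mu_2 < \cdots < \mu_N,
\]
noting that this list is finite (at most $\binom{n}{2}$ entries) and that strict inequalities hold because the election has all pairwise margins distinct. The inequality \eqref{eqn:CDP} is equivalent, for each index $k$, to
\[
\mu_k^p \;>\; \sum_{j=1}^{k-1} \mu_j^p.
\]
For $k=1$ the right-hand side is an empty sum and the statement is vacuous, so I need only treat $k \geq 2$.

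Next, I would divide both sides by $\mu_k^p > 0$ to obtain the equivalent form
\[
\sum_{j=1}^{k-1} \left(\frac{\mu_j}{\mu_k}\right)^p \;<\; 1.
\]
Because $\mu_j < \mu_k$ for every $j < k$, each ratio $\mu_j/\mu_k$ lies strictly in $(0,1)$, and therefore $(\mu_j/\mu_k)^p \to 0$ as $p \to \infty$. Since the sum has only $k-1 \leq N-1$ terms, the whole left-hand side tends to $0$ as $p \to \infty$. In particular, there exists $p_k \geq 0$ such that the displayed inequality holds for all $p \geq p_k$.

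Finally, I would set $p_* := \max_{2 \leq k \leq N} p_k$, which is well-defined since $N$ is finite. For every $p \geq p_*$ the inequality holds simultaneously for all $k$, and so \eqref{eqn:CDP} holds for every magnitude $|m_{ij}|$ appearing in the matrix. There is no real obstacle in this argument; the only point that matters is that distinctness of the margins makes every ratio strictly less than $1$, and finiteness of the set of margins lets the pointwise bounds $p_k$ be consolidated into one uniform threshold $p_*$.
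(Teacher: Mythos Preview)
Your proof is correct and follows essentially the same idea as the paper's: both reduce \eqref{eqn:CDP} to the elementary fact that a ratio strictly less than $1$, raised to the $p^{\text{th}}$ power, tends to $0$, and both exploit that there are only finitely many margins. The one noteworthy difference is that the paper, implicitly using that margins of victory are integers (so that every smaller $|m_{ab}|$ is at most $|m_{ij}|-1$), derives an explicit closed-form threshold $p_* = \ln\!\big(\tfrac{n(n-1)}{2}\big)\big/\ln\!\big(\tfrac{|m_{ij}|}{|m_{ij}|-1}\big)$, whereas your argument is non-constructive but has the advantage of applying verbatim to arbitrary distinct real-valued margins.
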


\begin{proof}
For any $|m_{ij}|$, there are fewer than ${n \choose 2} = \frac{n(n-1)}{2}$ smaller terms, because ${n \choose 2}$ is the total number of pairwise matchups possible when there are $n$ candidates. Therefore, as each smaller term $|m_{ab}| \leq |m_{ij}|-1$, it is enough to show that there is a value $p_*$ satisfying
\beqa
\label{eqn:ineq}
|m_{ij}|^p >  \frac{n(n-1)}{2}\big((|m_{ij}|-1)^p\big) \hspace{.2in}\text{for all $p \geq p_*$}.
\eeqa
Assuming that $m_{ij}$ is not the smallest term, so that $|m_{ij}|-1 >0$, \eqref{eqn:ineq} is equivalent to
\mbox{$p > \frac{\text{ln}\big(\frac{n(n-1)}{2}\big)}{\text{ln}\big(\frac{|m_{ij}|}{|m_{ij}|-1}\big)}\cdot$}
Because the election has only finitely many pairwise margins, we may choose $p_*$ to be the maximum of these finitely many bounds over all $|m_{ij}|$ that are not minimal.
\end{proof}

To illustrate this property, consider our four-candidate election \eqref{eqn:four}, for which the $Q$-sum of the ordering B $>$ C $>$ D $>$ A is
$$
Q_{\scalebox{0.4}{B $>$ C $>$ D $>$ A}} = -11^p + 9^p + 7^p + 5^p + 3^p + 1^p.
$$
Observe that, while $11^p < 9^p+9^p+9^p+9^p+9^p$ for all $p \leq 8$, $11^p$ dominates for all $p \geq 9$ (because the ratio $\big(\frac{11}{9}\big)^{9} > 1+1+1+1+1$). It follows that $11^p > 9^p+7^p+5^p+3^p+1^p$ for all $p \geq 9$ as well.

\begin{thm}
\label{thm:RP}
For all $p \geq p_*$ as in Lemma \ref{lemma:CDP}, the Ranked Pairs ordering uniquely maximizes the $Q$-sum, and therefore coincides with the $p$-ordering.
\end{thm}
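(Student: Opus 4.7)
The plan is to use the Cumulative Dominance Property (Lemma~\ref{lemma:CDP}) to recast maximization of the $Q$-sum as a lexicographic optimization on sign vectors, and then to recognize Ranked Pairs as the greedy algorithm that realizes this optimum. Enumerate the pairwise margins in decreasing order of magnitude, $|m_1| > |m_2| > \cdots > |m_N|$ with $N = \binom{n}{2}$. To each linear ordering $\pi$ of the candidates, associate the sign vector $\sigma(\pi) = (\epsilon_1(\pi),\ldots,\epsilon_N(\pi)) \in \{-1,+1\}^N$, where $\epsilon_k(\pi) = +1$ iff $\pi$ ranks the head-to-head winner of the $k$th matchup above the loser. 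By Corollary~\ref{cor:all}, the $p$-ordering is any $\pi$ maximizing $Q(\pi) = \sum_{k=1}^N \epsilon_k(\pi)\,|m_k|^p$.

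For any two orderings $\pi_1 \ne \pi_2$, let $k^*$ denote the first index where $\sigma(\pi_1)$ and $\sigma(\pi_2)$ disagree. Then
\[
Q(\pi_1) - Q(\pi_2) = \bigl(\epsilon_{k^*}(\pi_1) - \epsilon_{k^*}(\pi_2)\bigr)\,|m_{k^*}|^p + \sum_{k > k^*}\bigl(\epsilon_k(\pi_1) - \epsilon_k(\pi_2)\bigr)\,|m_k|^p,
\]
whose leading term has magnitude $2|m_{k^*}|^p$ and sign determined by the lex comparison. The tail has absolute value at most $2\sum_{k>k^*}|m_k|^p$, which is strictly less than $2|m_{k^*}|^p$ by Lemma~\ref{lemma:CDP} whenever $p \geq p_*$. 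Hence for such $p$, $Q(\pi_1) > Q(\pi_2)$ iff $\sigma(\pi_1) > \sigma(\pi_2)$ lexicographically (with $+1 > -1$), and maximizing $Q$ is equivalent to lex-maximizing $\sigma$.

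It remains to show that the lexicographic maximum of $\{\sigma(\pi)\}$ is attained uniquely by the Ranked Pairs ordering $\pi^*$. Let $\pi$ be any ordering with $\sigma(\pi) \ne \sigma(\pi^*)$, and let $k$ be the first position of disagreement. If $\epsilon_k(\pi^*) = +1$ and $\epsilon_k(\pi) = -1$, then $\sigma(\pi^*) > \sigma(\pi)$ at position $k$ and we are done. The opposite case is impossible: if $\epsilon_k(\pi^*) = -1$, Ranked Pairs must have skipped the $k$th matchup because adding its oriented edge would have produced a cycle with the subset $L \subseteq \{j < k : \epsilon_j(\pi^*) = +1\}$ of previously locked matchups; equivalently, the transitive closure of the edges in $L$ already forces $\pi^*$ to rank the loser above the winner of matchup $k$. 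By minimality of $k$, $\epsilon_j(\pi) = \epsilon_j(\pi^*) = +1$ for every $j \in L$, so $\pi$ respects the same edges and therefore also orients matchup $k$ in the $-1$ direction, contradicting $\epsilon_k(\pi) = +1$. Since distinct linear orderings yield distinct sign vectors, $\pi^*$ is the unique maximizer of $Q$ for every $p \geq p_*$.

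The main obstacle is the cycle argument at the heart of the previous paragraph: one must verify that whenever Ranked Pairs skips a matchup at step $k$, the already-locked subset $L$ genuinely forces the opposite orientation in every linear extension of $L$. This is handled by a short induction on the Ranked Pairs procedure, tracking the transitive closure of its accepted edges. The remainder of the proof is direct bookkeeping built on CDP.
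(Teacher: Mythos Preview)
Your proof is correct and follows essentially the same approach as the paper: use the Cumulative Dominance Property to reduce $Q$-maximization to a lexicographic (greedy) choice of signs on the margins taken in decreasing order, then identify that greedy procedure with Ranked Pairs. Your version is more explicit---formalizing the lexicographic comparison of sign vectors and spelling out why a skipped edge must carry sign $-1$ in every linear extension agreeing on the earlier locked edges---whereas the paper compresses this into a single informal paragraph.
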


\begin{proof}
Let $m_1, m_2,\dots,m_{{n\choose 2}}$ denote the margins of victory ordered from largest to smallest, all distinct and nonzero. We will repeatedly make use of the fact that, by \eqref{eqn:CDP}, for $p \geq p_*$ any $|m_i|^p$ will dominate the sum of all smaller terms in the $Q$-sum after it. To maximize the $Q$-sum, we therefore begin by restricting to orderings with both $m_1$ and $m_2$ positive; note that this can always be done. If we can also make $m_3$ positive, then we must do so because by \eqref{eqn:CDP} it will dominate the sum of all smaller terms. However, it is possible that a cycle may prohibit an ordering in which $m_1,m_2$, and $m_3$ are all positive. If that is the case, then make $m_3$ negative. Repeat this process, making every subsequent $m_i$ positive if it is possible to do so without changing the signs of $m_1,m_2,\dots,m_{i-1}$, and negative otherwise. Appealing to \eqref{eqn:CDP} at every step ensures that this ordering will uniquely maximize the $Q$-sum for $p \geq p_*$. But what we have just described is precisely the Ranked Pairs algorithm.
\end{proof}

\begin{cor}
\label{cor:sup}
The Ranked Pairs ordering minimizes the sup-norm $\|\cdot\|_{\scalebox{0.6}{$\infty$}}$ of the reduced m.o.v.~matrix. However, the orderings attaining the minimal sup-norm need not be unique.
\end{cor}
\begin{proof}
First, recall that the sup-norm $\|(x_1,\dots,x_n)\|_{\scalebox{0.6}{$\infty$}} = \text{max}\{|x_1|,\dots,|x_n|\}$ is formally the limit of the $p$-norm as $p \to \infty$. Now, let $\pi_{\scalebox{0.5}{RP}}$ denote the Ranked Pairs ordering, and suppose, for the sake of contradiction, that some ordering $\rho$ satisfies
$\|\rho\|_{\scalebox{0.6}{$\infty$}} < \|\pi_{\scalebox{0.5}{RP}}\|_{\scalebox{0.6}{$\infty$}}.$
Since $\|\pi_{\scalebox{0.5}{RP}}\|_{p} \to \|\pi_{\scalebox{0.5}{RP}}\|_{\scalebox{0.6}{$\infty$}}$ and $\|\rho\|_p \to \|\rho\|_{\scalebox{0.6}{$\infty$}}$, it follows that for all $p \geq M$ for some $M$, we would also have
$$
\|\rho\|_p < \|\pi_{\scalebox{0.5}{RP}}\|_p.
$$
But for $p \geq \text{max}\{p_*,M\}$, this contradicts Theorem \ref{thm:RP}. Therefore $\pi_{\scalebox{0.5}{RP}}$ must minimize the sup-norm. The second statement follows because distinct orderings may have the same largest violated margin.
\end{proof}

Taken together, Proposition \ref{prop:Slater}, Proposition \ref{prop:KY}, and Theorem \ref{thm:RP} show that the family of $p$-orderings (Definition \ref{def:Lp2}) interpolates between three familiar ways of measuring disagreement with the pairwise majority relation: counting disagreements (Slater) as $p \to 0^+$, summing their magnitudes (Kemeny--Young) at $p=1$, and eventually prioritizing larger disagreements lexicographically (Ranked Pairs) for all sufficiently large $p$. Regarding this $p$-spectrum, let us close by showing that there are only finitely many intervals on the positive $p$-axis on which the set of $p$-orderings is constant.
\begin{prop}[Finiteness of the $p$-spectrum]
\label{prop:spec}
For a fixed election with nonzero pairwise margins, the positive $p$-axis is divided into finitely many intervals on which the set of $p$-orderings is constant.
\end{prop}
\begin{proof}
Fix an ordering $\pi$, and let $x_1,\dots,x_k>0$ be the absolute values of the negative entries in the upper triangular part of its m.o.v.~matrix.  Recall from \eqref{eqn:S} that minimizing the $p$-norm $\|\pi\|_p$ is equivalent to minimizing
$$
S_{\pi}(p) \defeq (\|\pi\|_p)^p = \sum_{i=1}^k |x_i|^p = \sum_{i=1}^k e^{p\,\text{ln}|x_i|}.
$$
For a fixed election, the only way for the $p$-ordering to change as $p$ changes is if $S_{\pi}(p_*) = S_{\rho}(p_*)$ for two orderings $\pi,\rho$ and some $p_*$. Therefore, take two orderings $\pi,\rho$ and consider the difference $S_{\pi}(p) - S_{\rho}(p)$; after combining any equal exponents, this difference takes the general form
$$
S_{\pi}(p) - S_{\rho}(p) = \sum_{i=1}^r a_ie^{p\,\text{ln}|x_i|},
$$
for nonzero $a_i \in \mathbb{Z}$. As a function of $p>0$, this difference is either identically zero or else has finitely many zeros (the latter follows by induction on the number of terms $e^{p\,\text{ln}|x_i|}$, and the fact that if $S_{\pi}(p)$ has infinitely many zeros, then so does its derivative $S_{\pi}'(p)$, by the Mean Value Theorem). But there are only finitely many pairs $\pi,\rho$ of orderings, and pairs for which the function $S_{\pi}(p) - S_{\rho}(p)$ is identically zero do not create transition values. For all other pairs there are only finitely many values at which $S_{\pi}(p) = S_{\rho}(p)$, and hence there can only be finitely many intervals on which the set of $p$-orderings is constant.
\end{proof}

\bibliographystyle{alpha}
{\footnotesize\bibliography{p_norm}}

@book{brandt2016,
  editor    = {Brandt, Felix and Conitzer, Vincent and Endriss, Ulle and Lang, J{\'e}r{\^o}me and Procaccia, Ariel D.},
  title     = {Handbook of Computational Social Choice},
  publisher = {Cambridge University Press},
  year      = {2016}
}

@article{elkind2015,
  author  = {Elkind, Edith and Faliszewski, Piotr and Slinko, Arkadii},
  title   = {Distance rationalization of voting rules},
  journal = {Social Choice and Welfare},
  volume  = {45},
  number  = {2},
  pages   = {345--377},
  year    = {2015}
}

@book{laslier,
  title={Tournament {S}olutions and {M}ajority {V}oting},
  author={Laslier, Jean-Fran{\c{c}}ois},
  volume={7},
  year={1997},
  publisher={Springer}
}

@article{tideman,
  title={Independence of clones as a criterion for voting rules},
  author={Tideman, T. Nicolaus},
  journal={Social Choice and Welfare},
  volume={4},
  number={3},
  pages={185--206},
  year={1987},
  publisher={Springer}
}

@book{saari,
  title={Basic {G}eometry of {V}oting},
  author={Saari, Donald G.},
  year={2012},
  publisher={Springer Science \& Business Media}
}

@article{moulin,
  title={Condorcet's principle implies the no show paradox},
  author={Moulin, Herv{\'e}},
  journal={Journal of Economic Theory},
  volume={45},
  number={1},
  pages={53--64},
  year={1988},
  publisher={Elsevier}
}

@article{wes,
  title={Split Cycle: a new {C}ondorcet-consistent voting method independent of clones and immune to spoilers},
  author={Holliday, Wesley H. and Pacuit, Eric},
  journal={Public Choice},
  volume={197},
  number={1},
  pages={1--62},
  year={2023},
  publisher={Springer}
}

@article{fishburn,
  title={Paradoxes of preferential voting},
  author={Fishburn, Peter C. and Brams, Steven J.},
  journal={Mathematics Magazine},
  volume={56},
  number={4},
  pages={207--214},
  year={1983},
  publisher={Taylor \& Francis}
}

@article{perez,
  title={The strong no show paradoxes are a common flaw in {C}ondorcet voting correspondences},
  author={P{\'e}rez, Joaqu{\'i}n},
  journal={Social Choice and Welfare},
  volume={18},
  number={3},
  pages={601--616},
  year={2001},
  publisher={Springer}
}

@article{zavist,
  title={Complete independence of clones in the ranked pairs rule},
  author={Zavist, Thomas M. and Tideman, T. Nicolaus},
  journal={Social Choice and Welfare},
  volume={6},
  number={2},
  pages={167--173},
  year={1989},
  publisher={Springer}
}


\end{document}